\begin{document}

\title{
Avoiding Haag's Theorem \\ 
with \\ 
Parameterized Quantum Field Theory
\footnote{The final publication is available at Springer 
via http://link.springer.com/article/10.1007/s10701-017-0065-8}
}

Version 3.4

\author{Ed Seidewitz}
\email{seidewitz@mailaps.org}
\affiliation{14000 Gulliver's Trail, Bowie MD 20720 USA}

\date{28 Febrary 2017}
\pacs{03.65.Ca, 03.70.+k, 11.10.Cd, 11.80.-m}

\keywords{quantum field theory; Haag's theorem; scattering;
perturbation theory; parametrized quantum mechanics; relativistic
dynamics; spacetime paths}

\begin{abstract}

Under the normal assumptions of quantum field theory, Haag's theorem
states that any field unitarily equivalent to a free field must itself
be a free field.  Unfortunately, the derivation of the Dyson series
perturbation expansion relies on the use of the interaction picture,
in which the interacting field is unitarily equivalent to the free
field but must still account for interactions.  Thus, the traditional
perturbative derivation of the scattering matrix in quantum field
theory is mathematically ill defined.  Nevertheless, perturbative
quantum field theory is currently the only practical approach for
addressing scattering for realistic interactions, and it has been
spectacularly successful in making empirical predictions.  This paper
explains this success by showing that Haag's Theorem can be avoided
when quantum field theory is formulated using an invariant, fifth
\emph{path parameter} in addition to the usual four position
parameters, such that the Dyson perturbation expansion for the
scattering matrix can still be reproduced.  As a result, the
parameterized formalism provides a consistent foundation for the
interpretation of quantum field theory as used in practice and,
perhaps, for better dealing with other mathematical issues.

\end{abstract}

\maketitle


\noindent
\emph{Haag's theorem is very inconvenient; it means that the 
interaction picture exists only if there is no interaction.}
---Streater and Wightman \cite{streater64}

\section{Introduction} \label{sect:intro}

Haag's Theorem states that, under the normal assumptions of quantum
field theory (QFT), any field that is unitarily equivalent to a free
field must itself be a free field \cite{haag55,hall57,streater64}.
This is troublesome, because the usual Dyson perturbation expansion of
the scattering matrix is based on the interaction picture, in which
the interacting field is presumed to be related to the free field by a
unitary transformation.  And, as Streater and Wightman note, Haag's
theorem means that such a picture should not exist in the presence of
actual interaction.

There has therefore been some consternation in the literature over the
foundational implications of Haag's Theorem (in addition to
\cite{streater64}, see, for example, \cite{teller95,ticciati99,%
sklar00}).  Nevertheless, currently popular textbooks (such as
\cite{peskin95,weinberg95}) tend to simply ignore it.  Indeed, the
non-perturbative Lehmann-Symanzik-Zimmerman (LSZ) \cite{lsz55} and
Haag-Ruelle \cite{haag58,ruelle62} formalisms for scattering do not
run afoul of Haag's Theorem, so it is possible to formulate scattering
theory for QFT rigorously.

Unfortunately, the LSZ and Haag-Ruelle formalisms are not useful for
practical calculations of the $S$-matrix, for which perturbation
theory is always used.  So, the question remains as to why
perturbation theory works so well for this, despite Haag's Theorem
(see \cite{earman06} for a clear discussion of this point, and of
Haag's Theorem in general).

In this paper I will show that Haag's Theorem does not apply to a
\emph{parameterized} formulation of QFT. Not surprisingly, this
formulation starts from different assumptions than traditional QFT. A
parameterized formalism is one in which field operators have a fifth,
invariant parameter argument, in additional to the usual four position
arguments of Minkowski space.

There is a long history of approaches using a such a fifth parameter
for relativistic quantum mechanics, going back to proposals of Fock
\cite{fock37} and, particularly, Stueckelberg
\cite{stueckelberg41,stueckelberg42}, in the late thirties and early
forties.  The idea appeared subsequently in the work of a number of
well-known authors, including Nambu \cite{nambu50}, Feynman
\cite{feynman50,feynman51}, Schwinger \cite{schwinger51},
DeWitt-Morette \cite{morette51} and Cooke \cite{cooke68}.  However, it
was not until the seventies and eighties that the theory was more
fully developed, particularly by Horwitz and Piron
\cite{horwitz73,piron78} and Fanchi and Collins
\cite{collins78,fanchi78,fanchi83,fanchi93}, into what has come to be
called \emph{relativistic dynamics}.

A key feature of this approach is that time is treated comparably to
the three space coordinates, rather than as an evolution parameter.
The result is that relativistic quantum mechanics can be formulated in
a way that is much more parallel to non-relativistic quantum theory.
Further, the approach is particularly applicable to the study of
quantum gravity and cosmology, in which the fundamental equations
(such as the Wheeler-DeWitt equation) make no explicit distinction for
the time coordinate (see, e.g., \cite{teitelboim82,hartle83,hartle86,%
hartle95,halliwell01a,halliwell01b,halliwell02}).

Extension of this approach to a second-quantized QFT has been somewhat
more limited, focusing largely on application to quantum
electrodynamics \cite{fanchi79,saad89,land91,shnerb93,pavsic91,%
horwitz98,pavsic98}.  The formalism I will use here is derived from
related earlier work of mine on a foundational parameterized formalism
for QFT and scattering \cite{seidewitz06a,seidewitz06b,seidewitz09,%
seidewitz11}.  This formalism was developed previously in terms of
spacetime paths, but, in the present work, it is presented entirely in
field-theoretic mathematical language, without the use of spacetime
path integrals (though the concept of paths still remains helpful for
intuitive motivation).  The axiomatic approach used here is described 
in more detail in \refcite{seidewitz16}.

To show how Haag's Theorem formally breaks down, I will step through
an attempted proof of the equivalent of Haag's Theorem for
parameterized QFT, paralleling the proof of the theorem in traditional
axiomatic QFT. In order to provide a context for this,
\sect{sect:axiomatic:axioms} gives a brief overview of the axioms of
traditional QFT and \sect{sect:axiomatic:free} describes free fields
satisfying those axioms.  \sect{sect:axiomatic:haag} outlines the
formal proof of Haag's Theorem in that theory.

\Sect{sect:parameterized:axioms} then presents the parameterized QFT
formalism axiomatically, and \sect{sect:parameterized:free} describes
free fields satisfying those axioms.  (The axiomatic formalism for
parameterized QFT is only outlined here to the extent necessary for
the purposes of this paper---for a fuller presentation see
\refcite{seidewitz16}.)  \Sect{sect:parameterized:haag} next shows the
promised result, that Haag's Theorem does not apply in the axiomatic
parameterized formalism.  This allows a straightforward interaction
picture to be defined in \sect{sect:parameterized:interacting}, in
which interacting fields are related to free fields by a unitary
transformation.  \Sect{sect:conclusion} gives some concluding
thoughts.

In order to focus on the presentation of the key arguments in
\sect{sect:axiomatic:haag} and \sect{sect:parameterized:haag}, the proof of Haag's Theorem,
and its reconsideration, are just outlined semi-formally in these
sections.  However, the appendix includes formal statements of all
theorems, along with their proofs (or references to standard proofs 
in the literature).

Throughout, I will use a spacetime metric signature of $(-+++)$ and
take $\hbar = c = 1$.

\section{Axiomatic Quantum Field Theory} \label{sect:axiomatic}

The first formal proof of Haag's Theorem was based on axiomatic
quantum field theory \cite{hall57,streater64}.  This section
summarizes the axiomatic basis for QFT necessary to prove the theorem.
\Sect{sect:axiomatic:haag} then outlines the proof, which will be subsequently
reconsidered in the context of parameterized QFT.

\subsection{Axioms} \label{sect:axiomatic:axioms}

We begin with assumptions about the Hilbert space of states in
traditional QFT, and then continue with axioms on the fields that
operate on these states.

\setcounter{axiom}{-1}
\begin{axiom}[States] \label{axm:0}
    The states $\ket{\psi}$ are described by unit rays in a separable
    Hilbert space $\HilbH$, such that the following are true.
    \begin{enumerate}
	\item \emph{(Transformation Law)} Under a Poincar\'e
	transformation $\{\Delta x, \Lambda\}$ (where $\Delta x$ is a
	spacetime translation and $\Lambda$ is a Lorentz
	transformation), the states transformation according to a
	continuous, unitary representation $\opU(\Delta x, \Lambda)$
	of the Poincar\'e group.

	\item \emph{(Uniqueness of the Vacuum)} There is a unique,
	invariant vacuum state $\ket{0}$ such that $\opU(\Delta x,
	\Lambda)\ket{0} = \ket{0}$.

	\item \emph{(Spectral Condition)} Let 
	\begin{equation*}
	    \opU(\Delta x, 1) = \me^{\mi\opP^{\mu}\Delta x_{\mu}} \,.  
	\end{equation*}
	Then $\opP^{\mu}\opP_{\mu} = -m^{2}$, where $m$ is interpreted
	as a mass, and the eigenvalues of $\opP^{\mu}$ lie in or on
	the future light cone.
    \end{enumerate}
\end{axiom}

Note that the spectral condition given above captures the assumption 
that particles are \emph{on shell}, meeting the mass-shell 
constraint, $p^{2} = -m^{2}$, and that they advance into the future. 
In particular, time evolution is given by the (frame-dependent) 
Hamiltonian, $\opH = \opP^{0}$. 

QFT is then the theory of quantum fields that operate on these states.
Define a field operator $\oppsix$ that, for each spacetime position
$x$, acts on the states of the theory.  Actually, in order to avoid
improper states and delta functions, one should instead use the
so-called \emph{smeared} fields $\oppsi[f]$ such that
\begin{equation}
    \oppsi[f] = \intfour x\, f(x) \oppsix \,,
\end{equation}
for any $f$ drawn from an allowable set of \emph{test functions}.
However, for simplicity, the presentation here will freely use point
fields instead of the smeared fields, though the mathematics can still
be interpreted in terms of a more rigorous use of test functions and
distributions \cite{streater64,seidewitz16}.  Also, I only consider
scalar particles here, which is sufficient for addressing Haag's
Theorem.

The fields $\oppsix$ satisfy the following axioms, commonly known as
the \emph{Wightman axioms} \cite{streater64} (presented here for
fields that are not self-adjoint).

\renewcommand{\theaxiom}{\Roman{axiom}}
\begin{axiom}[Domain and Continuity of Fields] \label{axm:1}
    The field $\oppsix$ and its adjoint $\oppsit(x)$ are defined on a
    domain $D$ of states dense in $\HilbH$ containing the vacuum state
    $\ket{0}$.  The $\opU(\Delta x, \Lambda)$, $\oppsix$ and
    $\oppsit(x)$ all carry vectors in $D$ into vectors in $D$.
\end{axiom}

\begin{axiom}[Field Transformation Law] \label{axm:2}
    For any Poin\-car\'e transformation $\{\Delta x, \Lambda\}$, 
    \begin{equation} \label{eqn:A0a}\
	\opU(\Delta x, \Lambda) \oppsix
	\opU^{-1}(\Delta x, \Lambda)
	    = \oppsi(\Lambda x + \Delta x) \,.
    \end{equation}
\end{axiom}

\begin{axiom}[Local Commutivity]
    For any spacetime positions $x$ and $x'$,
    \begin{equation*}
	[\oppsix, \oppsi(x')] = [\oppsit(x), \oppsit(x')] = 0 \,.
    \end{equation*}
    Further, if $x$ and $x'$ are space-like separated,
    \begin{equation*}
	[\oppsix, \oppsit(x')] = 0 \,.
    \end{equation*}
\end{axiom}

\begin{axiom}[Cyclicity of the Vacuum] \label{axm:3}
    The vacuum state $\ket{0}$ is \emph{cyclic} for the fields
    $\oppsix$, that is, polynomials in the fields and their adjoints,
    when applied to the vacuum state, yield a set $D_{0}$ dense in
    $\HilbH$.
\end{axiom}

\subsection{Free Fields} \label{sect:axiomatic:free}

There is at least one class of field theories that satisfy the
Wightman axioms, those in which the fields are \emph{free}, that is,
they describe particles that do not interact.  The Hilbert space of a
free-field theory is thus a Fock space that is the direct sum of
subspaces of states with a fixed number of particles.  The fields
themselves are solutions of the Klein-Gordon equation
\begin{equation*}
    \left(-\frac{\partial^{2}}{\partial x^{2}} + m_{n}^{2} \right) 
	\oppsix = 0\,,
\end{equation*}
where $\partial^{2} / \partial x^{2}$ denotes the four-dimensional
relativistic d'Alembertian, with the two-point vacuum expectation
values
\begin{equation} \label{eqn:A0b}
    \bra{0}\oppsix\oppsit(\xz)\ket{0} = \propa \,,
\end{equation}
where
\begin{equation*}
    \propa \equiv (2\pi)^{-3}\intthree p\,
	\frac{\me^{\mi[-\Ep(x^{0}-\xz^{0}) + 
		       \threep\cdot(\threex-\threexz)]}}
	     {2\Ep}
\end{equation*}
and $\Ep \equiv \sqrt{\threep^{2}+m^{2}}$.

The usual approach for introducing interactions is to use the
\emph{interaction picture}, relating the interacting fields
$\oppsi_{I}(x)$ to the free fields by a transformation
\begin{equation*}
    \oppsi_{I}(t, \threex) = \opG(t)\oppsi(t, \threex)\opG(t)^{-1} \,,
\end{equation*}
where the time-dependence of $\opG$ is intended to reflect the
presence of interactions.  Such a $\oppsi_{I}(x)$ can be constructed
that satisfies the Wightman axioms (though its Lorentz covariance is
certainly not manifest, due to the time dependence of $\opG(t)$).
Unfortunately, Haag's Theorem shows that, under the Wightman axioms,
any field unitarily related to a free field will itself be a free
field, so an interacting theory cannot be consistently constructed in
this way.

\subsection{Haag's Theorem} \label{sect:axiomatic:haag}

This section outlines the proof of Haag's Theorem, which actually
follows from the results of two other theorems.  All three theorems
are summarized below.  \App{app:haag1} contains the formal statement
of these theorems, but the summary given here presents the key points
necessary for understanding Haag's theorem.

\begin{enumerate}
    \item Let $\oppsi_{1}$ and $\oppsi_{2}$ be two field operators,
    defined in respective Hilbert spaces $\HilbH_{1}$ and
    $\HilbH_{2}$, satisfying the Wightman axioms.  Suppose there exists
    a unitary operator $\opG$ such that, at a specific time $t$,
    $\oppsi_{2}(t, \threex) = \opG\oppsi_{1}(t, \threex)\opG^{-1}$.
    Then the equal-time vacuum expectation values of the fields at
    time $t$ are the same.

    \item Any free field satisfies \eqn{eqn:A0b}.  In addition,
    conversely, if the two-point expectation value for a field
    satisfies \eqn{eqn:A0b}, then the field is a free field.

    \item (Haag's Theorem) Let $\oppsi_{1}$ be a free field, so that
    it satisfies \eqn{eqn:A0b}, and let $\oppsi_{2}$ be a field
    unitarily related to $\oppsi_{1}$ at time $t$, as above.  Then
    $\oppsi_{2}$ will also satisfy \eqn{eqn:A0b} if $x$ and $\xz$ are
    both at time $t$.  However, the Lorentz-covariance of $\oppsi_{2}$
    allows the satisfaction of \eqn{eqn:A0b} to be extended to any two
    spacelike positions and then, by analytic continuation, to any two
    positions.  Therefore, $\oppsi_{2}$ is also a free field.
\end{enumerate}

Because of Haag's Theorem, a QFT of interacting particles that
satisfies the Wightman axioms cannot be unitarily equivalent to the
known theory of free particles.  So far, such an interacting theory
(for the case of four-dimensional spacetime) has yet to be found.
Nevertheless, as previously noted, empirically well-supported
predictions of the standard model are based on calculations using
perturbative QFT, which assumes that the interacting fields are, in
fact, unitarily related to the free fields, in contradiction to Haag's
Theorem.

This naturally leads one to consider whether there might be an
alternative set of axioms for QFT, for which Haag's Theorem would not
apply, but which could reproduce the results of traditional
perturbative QFT. In particular, note that the proof of \thm{thm:3}
essentially relies on a conflict between the presumption that the
fields are Lorentz-covariant and the special identification of time 
in the assumptions of \thm{thm:1}. This provides a motivation for 
considering parameterized QFT, which dispenses with any 
identification of time as an evolution parameter for fields.

With this motivation, I present, in the next section, a modified set
of axioms that are satisfied by a parameterized QFT, and I describe
the nature of parameterized free fields that satisfy those axioms.
Then I show that the proof of Haag's Theorem breaks down under the new
axioms, allowing for an interacting field that is unitarily equivalent
to a free field, and outline how such a field can be constructed using
an interaction picture in the parameterized theory.

\section{Parameterized Quantum Field Theory} \label{sect:parameterized}

This section presents the basic axioms of parameterized QFT, generally
paralleling the axioms for traditional QFT as presented in
\sect{sect:axiomatic}.  This is not intended to be a complete
presentation of the parameterized theory, but, instead, to just
outline a basic axiomatic foundation parallel to the overview given
for traditional QFT in \sect{sect:axiomatic}.  For a full presentation
of axiomatic, parameterized QFT, see \refcite{seidewitz16}.

\subsection{Axioms} \label{sect:parameterized:axioms}

We begin by stripping the assumptions on the Hilbert space of states 
down to the minimum necessary for a relativistic theory.

\renewcommand{\theaxiomp}{\arabic{axiomp}*}
\setcounter{axiomp}{-1}
\begin{axiomp}[States] \label{axm:0p}
    The states $\ket{\psi}$ are described by unit rays in a separable
    Hilbert space $\HilbH$.  Under a Poincar\'e transformation
    $\{\Delta x, \Lambda\}$ (where $\Delta x$ is a spacetime
    translation and $\Lambda$ is a Lorentz transformation), the states
    transformation according to a continuous, unitary representation
    $\opU(\Delta x, \Lambda)$ of the Poincar\'e group.
\end{axiomp}

Note, in particular, that there is no equivalent here of the spectral
condition.  That is, we begin, for the parameterized theory, with
states that are inherently \emph{off shell}.  Nevertheless, we can
still define the Hermitian operator $\opP$ such that
\begin{equation*}
    \opU(\Delta x,1) = \me^{\mi\opP \cdot \Delta x}
\end{equation*}
and interpret $\opP$ as the relativistic energy-momentum operator, as
usual.  However, we will \emph{not} require that the eigenvalues of
$\opP$ be on-shell.

Of course, $\opP^{\mu}\opP_{\mu}$ is still a Casimir operator of the
Poincar\'e group and, so, acts on any irreducible representation as an
invariant multiple of the identity.  Thus, in the parameterized
theory, we are \emph{not} requiring that the $\opU(\Delta x, \Lambda)$
be an irreducible representation for single-particle states.
Position-representation states may essentially be superpositions of
momentum states with squared-momentum $p^{2} = \mu$, for all $\mu$,
positive and negative.  This is sometimes considered explicitly as a
superposition of different mass states \cite{frastai95,land91,%
enatsu86} (see also the discussion at the end of Sec.  IIA of
\refcite{seidewitz06a}).

Moreover, we will no longer consider $\opP^{0}$, the generator of time
translation and the energy observable, to also be the Hamiltonian
operator.  Instead, we will allow any operator $\opH$ to be considered
a Hamiltonian operator if it has the following properties:
\begin{enumerate}
    \item $\opH$ is Hermitian.
    
    \item $\opH$ commutes with all spacetime transformations
    $\opU(\Delta x, \Lambda)$ (and, hence, with $\opP$).
    
    \item $\opH$ has an eigenstate $\ket{0} \in \HilbH$ such that
    $\opH\ket{0} = 0$, and this is the unique null (normalizable)
    eigenstate for $\opH$.
\end{enumerate}

Note that the concept of a \emph{vacuum state} $\ket{0}$ has
essentially been re-introduced here, but only relative to a choice of
Hamiltonian operator.  Each Hamiltonian operator must have a unique
vacuum state, but different Hamiltonian operators defined on the same
Hilbert space may have different vacuum states.  On the other hand,
the Hamiltonian operator is \emph{not} required to be positive
definite, since it is no longer considered to represent the energy
observable, so this vacuum state is not a ``ground state'' in the
traditional sense.  However, when $\HilbH$ is a Fock space with a
particle interpretation, the vacuum state for an identified
Hamiltonian will be the ``no particle'' state.

Given the choice of a Hamiltonian operator $\opH$, we can consider the
one-dimensional group generated by this operator.  The
frame-independent parameter $\lambda$ is the evolution parameter for
this group.  We can then define Schr\"odinger and Heisenberg pictures
for evolution in $\lambda$, analogously to the definitions for time
evolution,.

That is, consider a state $\ket{\psi} \in \HilbH$ and a Hamiltonian
operator $\opH$ defined on $\HilbH$.  The Schr\"odinger-picture
evolution of the state is then given by
\begin{equation*}
    \ket{\psi(\lambda)} = \me^{-\mi\opH\lambda} \ket{\psi} \,.
\end{equation*}
Similarly, if $\opA$ is an operator on $\HilbH$, then its 
Heisenberg-picture evolution is given by
\begin{equation*}
    \opA(\lambda) = \me^{\mi\opH\lambda}\opA\me^{-\mi\opH\lambda} \,.
\end{equation*}

As mentioned in \sect{sect:intro}, $\lambda$ can be considered to
parameterize the path of a particle in spacetime.  Physically, then, a
Hamiltonian operator $\opH$ is the generator of evolution of a
particle along its path.  But note that there is no constraint that
such a path have any fixed direction in time.  In particular, the path
is not constrained to be only a timelike trajectory, with $\lambda$
then being just the proper time for the particle.  Such a restriction
would be equivalent to re-introducing the mass-shell constraint,
which would essentially return the theory to the traditional
formulation.

Nevertheless, there is clearly an equivalence class of possible paths 
that a particle may take between fixed starting and ending points in 
spacetime, which need to be integrated over to obtain the complete 
amplitude for particle propagation from one point to another. And, 
after such integration, the path parameter should no longer appear in 
final, physical results. We will return to this point at the end of 
this section. 

For now, though, we turn back to the field theoretic formalism.
Fields in parameterized QFT are, in fact, defined in the same way as
fields in traditional QFT, as operators $\oppsix$, for each spacetime
position $x$, that act on the states of the theory.  The axioms are
also similar to those of traditional QFT. However, they are altered to
account for the lack of spectral constraints and the different
conception of the Hamiltonian.

\renewcommand{\theaxiomp}{\Roman{axiomp}*}
\begin{axiomp}[Domain and Continuity of Fields] \label{axm:1p}
    The field $\oppsix$ and its adjoint $\oppsit(x)$ are defined on a
    domain $D$ of states dense in $\HilbH$.  The $\opU(\Delta x,
    \Lambda)$, any Hamiltonian $\opH$, $\oppsix$ and $\oppsit(x)$ all
    carry vectors in $D$ into vectors in $D$.
\end{axiomp}

\begin{axiomp}[Field Transformation Law] \label{axm:2p}
    For any Poin\-car\'e transformation $\{\Delta x, \Lambda\}$, 
    \begin{equation*} 
	\opU(\Delta x, \Lambda) \oppsix 
	\opU^{-1}(\Delta x, \Lambda)
	    = \oppsi(\Lambda x + \Delta x) \,.
    \end{equation*}
\end{axiomp}

\begin{axiomp}[Commutation Relations] \label{axm:3p}
    For any spacetime positions $x$ and $x'$,
    \begin{equation*}
	[\oppsi(x'), \oppsix] = [\oppsit(x'), \oppsit(x)] = 0
    \end{equation*}
    and
    \begin{equation*}
	[\oppsi(x'), \oppsit(x)] = \delta^{4}(x'-x) \,.
    \end{equation*}
\end{axiomp}

\begin{axiomp}[Cyclicity of the Vacuum] \label{axm:4p}
    If $\opH$ is a Hamiltonian operator, then its vacuum state
    $\ket{0}$ is in the domain $D$ of the field operator, and
    polynomials in the field $\oppsix$ and its adjoint
    $\oppsit(x)$, when applied to $\ket{0}$, yield a set $D_{0}$ of
    states dense in $\HilbH$.
\end{axiomp}

These axioms are presented in terms of the essentially
Schr\"odinger-picture operators $\oppsix$.  However, for a Hamiltonian
$\opH$, we can now also define the (parameterized) Heisenberg-picture
form of a field operator:
\begin{equation*}
    \oppsixl 
	= \me^{\mi\opH\lambda}\oppsix\me^{-\mi\opH\lambda} \,.
\end{equation*}
Taking the limit of infinitesimal $\Delta \lambda$ then gives the
dynamic evolution of the field as
\begin{equation} \label{eqn:A1}
    \mi \pderiv{}{\lambda} \oppsixl = [\oppsixl, \opH] \,.
\end{equation}

\subsection{Free Fields} \label{sect:parameterized:free}

As for traditional QFT, we can define a parameterized field theory of
free particles that satisfies the above axioms.  And, again, the
Hilbert space of such a theory is a Fock space of states of a fixed
number of non-interacting particles.  Instead of the Klein-Gordon
equation, however, for a parameterized free field, we require that 
$\opH$ be chosen such that
\begin{equation} \label{eqn:A1a}
    [\oppsixl, \opH] 
	= \left(-\frac{\partial^{2}}{\partial x^{2}} + m^{2} 
	  \right) \oppsixl\,,
\end{equation}
where $m$ is the particle mass.  Together with \eqn{eqn:A1}, this
gives the \emph{Stueckelberg-Schr\"odinger field equation}
\cite{stueckelberg41,seidewitz06a}
\begin{equation} \label{eqn:A2}
    \mi \pderiv{}{\lambda}\oppsixl 
	= \left(-\frac{\partial^{2}}{\partial x^{2}} + m^{2} 
	  \right) \oppsixl\,.
\end{equation}

We can also define the momentum field $\oppsil{p}$ as the
four-dimensional Fourier transform of $\oppsixl$.  Like $\oppsixl$,
$\oppsil{p}$ evolves in $\lambda$ using $\opH$.  Taking the Fourier
transform of \eqn{eqn:A2} gives the field equation
\begin{equation} \label{eqn:A4}
    \mi \pderiv{}{\lambda}\oppsil{p} 
	= (p^{2} + m^{2}) \oppsil{p} \,.
\end{equation}
Recall that the four-momentum here is not necessarily on-shell.  Since
$p^{2}$ is relative to the Minkowski metric, $p^{2} + m^{2}$ can
take on any value, positive, negative or zero.

\Eqn{eqn:A4} may be integrated to get
\begin{equation} \label{eqn:A5}
    \oppsil{p} = \kernelpd \oppsilz{p} \,,
\end{equation}
where
\begin{equation*}
    \kernelpd \equiv 
	\me^{-\mi(p^{2} + m^{2})(\lambda - \lambdaz)} \,.
\end{equation*}
Taking the inverse Fourier transform of the momentum field in
\eqn{eqn:A5} then gives
\begin{equation} \label{eqn:A6}
    \oppsixl = \intfour \xz\, \kerneld \oppsilz{\xz} \,,
\end{equation}
where
\begin{equation} \label{eqn:A6a}
    \kerneld \equiv (2\pi)^{-4}
	\intfour p\, \me^{\mi p \cdot (x - \xz)} \kernelpd\,.
\end{equation}

The two-point vacuum expectation value of the parameterized free field
is given by
\begin{equation} \label{eqn:A6b}
    \bra{0} \oppsixl \oppsitlz{\xz} \ket{0} = \kerneld \,.
\end{equation}
This represents the propagation of a particle from four-position $\xz$
to four-position $x$ over any path $q$ with $q(\lambdaz) = \xz$ and
$q(\lambda) = x$. 

As mentioned earlier, paths in general are invariant under continuous
transformation of the path parameterization.  This is a well-known
gauge invariance of spacetime path formalisms.  After integrating out
the gauge volume of all possible functional re-parameterizations, the
only gauge variance that is left is the so-called \emph{intrinsic
length} of the path, $|\lambda - \lambdaz|$.  (For details see
\refcites{teitelboim82,seidewitz06a}.)

To represent propagation over all possible paths,it thus remains
necessary to integrate over all intrinsic lengths.  This may be
implemented by integrating \eqn{eqn:A6b} over all $\lambda >
\lambdaz$, which gives
\begin{equation} \label{eqn:A7}
    \int^{\infty}_{\lambdaz} \dl\, 
	\bra{0} \oppsixl \oppsitlz{\xz} \ket{0}
	= \prop \,,
\end{equation}
where
\begin{equation*}
    \begin{split}
	\prop &\equiv \int^{\infty}_{\lambdaz} \dl\, \kerneld \\
	      &= \int^{\infty}_{0} \dl\, 
			\kersym(x - \xz; \lambda) \\
	      &= -\mi(2\pi)^{-4}\intfour p\, 
			\frac{\me^{\mi p\cdot(x-\xz)}}
			     {p^{2}+m^{2}-\mi\varepsilon}
    \end{split}
\end{equation*}
is just the Feynman propagator \cite{seidewitz06a}.  Note that the
left-hand side of \eqn{eqn:A7} still involves the arbitrary parameter
value $\lambdaz$, even though the resulting propagator on the
right-hand side does not depend on it.  This simply reflects the
remaining global freedom to shift together the starting parameter
values of all paths.

As shown again by \eqn{eqn:A7}, particles are considered to be 
fundamentally off shell in the parameterized formalism. Nevertheless, 
it is always possible to divide the Feynman propagator into 
future-directed and past-directed parts (see, e.g., 
\refcite{peskin95} or \cite{weinberg95}):
\begin{equation*}
    \prop = \theta(x^{0}-\xz^{0})\propasym(x-\xz) 
	  + \theta(\xz^{0}-x^{0})\propasym(x-\xz)\conj \,,
\end{equation*}
where $\propa$ is the on-shell propagator from \eqn{eqn:A0b}.  Thus,
in the special case of a particle that is unambiguously
future-directed (i.e., $x^{0} > \xz^{0}$), $\prop = \propa$, and the
particle can be considered to be on shell.  This is essentially just
the reverse of the usual derivation in traditional QFT, in which
``physical'' particles are considered to be fundamentally on shell,
but in which future and past-directed on-shell propagators are combined
to form the Feynman propagator for off-shell ``virtual'' particles.

\subsection{Haag's Theorem Reconsidered} \label{sect:parameterized:haag}

This section reconsiders Haag's Theorem in the context of
parameterized QFT. The conclusion is that Haag's Theorem does not, in
fact, hold under the axioms given for parameterized QFT in
\sect{sect:parameterized:axioms}.  To see this, consider carrying out
a proof of Haag's Theorem for parameterized fields similar to the one
outlined in \sect{sect:axiomatic:haag} for traditional fields.  As in
\sect{sect:axiomatic:haag}, this argument is summarized below, with
the formalization given in \app{app:haag2}.

\begin{itemize}
    \item[1*.]  Let $\oppsi_{1}$ and $\oppsi_{2}$ be two off-shell
    field operators, defined in respective Hilbert spaces $\HilbH_{1}$
    and $\HilbH_{2}$, satisfying the axioms of parameterized QFT.
    Suppose there exists a unitary operator $\opG$ such that
    $\oppsi_{2}(x) = \opG \oppsi_{1}(x) \opG^{-1}$.  Then the
    equal-$\lambda$ vacuum expectation values of the fields are the
    same.

    \item[2*.]  Any free field satisfies \eqn{eqn:A6b}.  In addition,
    conversely, if the two-point expectation value for a field
    satisfies \eqn{eqn:A6b}, then the field is a free field.

    \item[3*.]  (Haag's Theorem?)  Let $\oppsi_{1}$ be a free field,
    so that it satisfies \eqn{eqn:A6b}, and let $\oppsi_{2}$ be a
    field unitarily related to $\oppsi_{1}$, as above.  Then
    $\oppsi_{2}$ will also satisfy \eqn{eqn:A6b} for $\lambda =
    \lambdaz$.  But now, if we try to generalize this to unequal
    parameter values $\lambda$ and $\lambdaz$, there is no equivalent
    of the Lorentz transformation to use in order to bring the
    parameter values back to equality.  All that is available is
    parameter translation, which would maintain the difference
    $\lambda - \lambdaz$.  Thus, \thm{thm:2p} does not apply and the
    proof of Haag's theorem does not go through.
\end{itemize}

\thm{thm:1p} here can be considered a kind of inverse to the
well-known Wightman reconstruction theorem (see \cite{streater64},
Theorem 3-7).  A consequence of the reconstruction theorem is that any
fields over spacetime that have the same vacuum expectation values
will be unitarily equivalent.  \thm{thm:1p} shows that fields that are
so unitarily related will (not surprisingly) have equal vacuum
expectation values.  (For parameterized fields, we do not require the
expectation values to meet a spectral condition or for the vacuum
state to be absolutely unique, but the proof of the reconstruction
theorem otherwise goes through, for a QFT that also does not meet such
conditions \cite{seidewitz16}.)

However, the key point here is that the assumptions of \thm{thm:1p}
and the resulting expectation values are all for parameterized fields
at \emph{equal values} of $\lambda = \lambdaz$.  If a similar
condition of equal expectation values also held for \emph{different}
values of $\lambda$ for the two fields, then the unitary operator
$\opG$ relating the fields would have to be independent of $\lambda$.
In this case, it is clear that, if one field satisfies the free field
equation \eqn{eqn:A2}, then the other will, too.

But the fact that \thm{thm:1p} only requires equality of expectation
values for equal values of $\lambda$ means that the fields can be
related by a \emph{different} value of $\opG(\lambda)$ for each value
of $\lambda$.  \thm{thm:3}, Haag's Theorem, shows that the requirement
of Lorentz-covariance of the fields prevents a similar construction in
traditional QFT in which $\opG$ depends on time (as required by the
traditional interaction picture).  In parameterized QFT, however, the
translation group of $\lambda$ does not mix with the group of Lorentz
transformations, so dependence of $\opG$ on $\lambda$ is not a
problem.

\subsection{Interacting Fields} \label{sect:parameterized:interacting}

In the absence of Haag's Theorem, we are free to use an interaction
picture representation for developing an interacting parameterized
QFT, which is what will be done in this section.  Begin by considering
a free field $\oppsix$ defined on a Hilbert space $\HilbH$ with a free
Hamiltonian operator $\opH$ and vacuum state $\ket{0}$.  Now consider
a unitary transformation $\opG$ of $\HilbH$ onto \emph{itself}.  This
then induces a transformed field $\oppsi'(x) = \opG\oppsix\opG^{-1}$
with a Hamiltonian $\opH' = \opG\opH\opG^{-1}$ and a vacuum state
$\ket{0'} = \opG\ket{0}$.

Now, in this case,
\begin{equation*}
    \begin{split}
	\paraml{\oppsi'}{x}
	    &= \me^{\mi\opH'\lambda}\oppsi'(x)
	       \me^{-\mi\opH'\lambda} \\
	    &= \me^{\mi\opH'\lambda}\opG\me^{-\mi\opH\lambda}
	       \oppsixl\me^{\mi\opH\lambda}\opG
	       \me^{-\mi\opH'\lambda} \\
	    &= \opG\oppsixl\opG^{-1} \,,
    \end{split}
\end{equation*}
since $\opH'\opG = \opG\opH$.  So, the Heisenberg-picture forms of the
fields are also unitarily related by $\opG$, which means that
$\paraml{\oppsi'}{x}$ is also a free field under the Hamiltonian
$\opH'$.

However, since $\oppsi'$ is defined on the same Hilbert space $\HilbH$
as $\oppsi$, it can be evolved equally well using $\opH$ rather than
$\opH'$.  Doing so results in the \emph{interaction-picture} form of
$\oppsi'$:
\begin{equation*}
    \paraml{\oppsi'_{I}}{x}
	= \me^{\mi\opH\lambda}\oppsi'(x)\me^{-\mi\opH\lambda}
	= \opG(\lambda)\oppsixl\opG^{-1}(\lambda) \,,
\end{equation*}
where
\begin{equation*}
    \opG(\lambda) = \me^{\mi\opH\lambda}G\me^{-\mi\opH\lambda} \,.
\end{equation*}
So the interaction-picture field $\paraml{\oppsi'_{I}}{x}$ is still 
unitarily related to the Heisenberg-picture free field $\oppsixl$, 
but now by a transformation $\opG(\lambda)$ that is \emph{not} 
constant in $\lambda$ (presuming that $\opG$ and $\opH$ do not 
commute).

The result is that the field equation for $\paraml{\oppsi'_{I}}{x}$ 
is no longer that of a free field. Consider that
\begin{equation} \label{eqn:B1}
    \begin{split}
	\pderiv{}{\lambda}\paraml{\oppsi'_{I}}{x}
	    &= \pderiv{\opG(\lambda)}{\lambda} \oppsixl  
	       \opG^{-1}(\lambda) \\
	    &\phantom{=}\ 
	       + \opG(\lambda) \pderiv{}{\lambda}\oppsixl
	         \opG^{-1}(\lambda) \\
	    &\phantom{=}\ 
	       - \opG(\lambda) \oppsixl \opG^{-1}(\lambda)
	         \pderiv{\opG(\lambda)}{\lambda}\opG^{-1}(\lambda) \\
	    &= \pderiv{\opG(\lambda)}{\lambda} \opG^{-1}(\lambda) 
	       \paraml{\oppsi'_{I}}{x} \\
	    &\phantom{=}\ 
	       + \mi\opG(\lambda)[\opH,\oppsixl]\opG^{-1}(\lambda) \\
	    &\phantom{=}\ 
	       - \paraml{\oppsi'_{I}}{x}
	         \pderiv{\opG(\lambda)}{\lambda}\opG^{-1}(\lambda) \\
	    &= \mi[\opH'(\lambda), \paraml{\oppsi'_{I}}{x}] \\
	    &\phantom{=}\ 
	       + [\pderiv{\opG(\lambda)}{\lambda}\opG^{-1}(\lambda),
	          \paraml{\oppsi'_{I}}{x}] \,,
    \end{split}
\end{equation}
where
\begin{equation*}
    \opH'(\lambda) 
	= \me^{\mi\opH\lambda}\opH'\me^{-\mi\opH\lambda}
	= \opG(\lambda)\opH\opG^{-1}(\lambda) \,.
\end{equation*}
But, since $\paraml{\oppsi'_{I}}{x}$ evolves according to $\opH$,
\begin{equation*}
    \pderiv{}{\lambda}\paraml{\oppsi'_{I}}{x}
	= \mi[\opH, \paraml{\oppsi'_{I}}{x}] \,,
\end{equation*}
so we can take
\begin{equation*}
    \opH'(\lambda) = \opH + \Delta\opH(\lambda) \,,
\end{equation*}
were
\begin{equation*}
    \Delta\opH(\lambda) 
	= \mi\pderiv{\opG(\lambda)}{\lambda}\opG^{-1}(\lambda) \,.
\end{equation*}

Now, as a free field, $\oppsixl$ satisfies the commutation condition 
with $\opH$ given in \eqn{eqn:A1a}. However, since $\opG(\lambda)$ 
does not depend on $x$, this also implies that
\begin{equation*}
    [\paraml{\oppsi'_{I}}{x}, \opH'(\lambda)]
	= \left(-\frac{\partial^{2}}{\partial x^{2}} + m^{2} 
	  \right) \paraml{\oppsi'_{I}}{x} \,.
\end{equation*}
Substituting this into \eqn{eqn:B1} then gives
\begin{multline*}
    \mi\pderiv{}{\lambda}\paraml{\oppsi'_{I}}{x}
	= \left(-\frac{\partial^{2}}{\partial x^{2}} + m^{2} 
	  \right) \paraml{\oppsi'_{I}}{x} \\
	  + [\Delta\opH(\lambda), \paraml{\oppsi'_{I}}{x}] \,.
\end{multline*}
Thus, the field equation for $\paraml{\oppsi'_{I}}{x}$ has the form 
of that for an \emph{interacting} field, with the interaction 
Hamiltonian $\Delta\opH(\lambda)$.

Note also that the $\lambda$-dependent transformation $\opG(\lambda)$ 
essentially induces a different effective vacuum state
\begin{equation*}
    \ketl{0'} = \opG(\lambda)\ket{0} = \me^{\mi\opH\lambda}\ket{0}
\end{equation*}
for each value of $\lambda$, such that $\opH'(\lambda)\ketl{0'} = 0$. 
However, consider the construction of multi-particle basis states 
from the interaction-picture field adjoints, all at a fixed value of 
$\lambda = \lambdaz$:
\begin{equation*}
    \begin{split}
	\ketlz{\argNx}_{I}
	    &= \paramlz{\oppsi'_{I}}{x_{1}}\adj \cdots
	       \paramlz{\oppsi'_{I}}{x_{N}}\adj \ketlz{0'}_{I} \\
	    &= \opG(\lambdaz) \oppsitlz{x_{1}} \cdots
	       \oppsitlz{x_{N}} \ket{0} \\
	    &= \opG(\lambdaz) \ketlz{\argNx} \,.
    \end{split}
\end{equation*}
Further, since $\ketlz{\argNx}_{I} \in \HilbH$, it can be expanded in 
the free-particle basis states for any $\lambda$, such that
\begin{multline*}
    \ketlz{\argNx}_{I} = \\
	\sum_{N'=0}^{\infty} 
	\left( \prod_{i=1}^{N}\intfour x'_{i} \right)
	\ket{\seqn{\xpli}{N'}} \\
	\bra{\seqn{\xpli}{N'}} \opG(\lambdaz) \ketlz{\argNx} \,,
\end{multline*}
where
\begin{equation*}
    \ket{\seqn{\xpli}{N'}}
	= \param{\oppsit}({x_{1}}{\lambda_{1}}) \cdots 
	  \param{\oppsit}({x_{N'}}{\lambda_{N'}}) \ket{0} \,.
\end{equation*}
The matrix elements 
\begin{multline*}
    G_{\lambdaz, \lambda_{1}, \ldots, \lambda_{N'}}
    (\argn{x'}{N'}; \argNx) = \\
	\bra{\seqn{\xpli}{N'}} \opG(\lambdaz) \ketlz{\argNx}
\end{multline*}
then represent the probability amplitude for $N$ particles at
spacetime positions $\argNx$ at the starting path parameter $\lambdaz$
to result, under interaction, in $N'$ particles at spacetime positions
$\argn{x'}{N'}$ at the ending path parameter values
$\argn{\lambda}{N'}$.

Now suppose that the interaction represented by $\opG(\lambda)$ is
restricted to a limited region of spacetime, and that the $x_{i}$ and
$x'_{i}$ are all outside this region.  Then $G_{\lambdaz, \lambda_{1},
\ldots, \lambda_{N'}} (\argn{x'}{N'}; \argNx)$ is the probability
amplitude for $N$ particles to propagate into the given region,
interact and result in $N'$ particles propagating out of the region.
Note that each of the outgoing particles is allowed a \emph{different}
ending parameter value.  This is important, because, as discussed at
the end of \sect{sect:parameterized:free}, it is necessary to
integrate over the ending parameter values to get the correct full
propagation factors for each particle:
\begin{multline} \label{eqn:B2}
    G_{\lambdaz}(\argn{x'}{N'}; \argNx) = \\
    \left( \prod_{i=1}^{N'} \int_{\lambdaz}^{\infty} \dl_{i} \right)
	G_{\lambdaz, \lambda_{1}, \ldots, \lambda_{N'}}
	(\argn{x'}{N'}; \argNx) \,.
\end{multline}

This, then, is just the probability amplitude for scattering from
$\argNx$ to $\argn{x'}{N'}$.  Now consider the limiting case in which
the $\argNx$ are in the infinite past and the $\argn{x'}{N'}$ in the
infinite future.  In this case, the initial propagation of a particle
from one of the $\argNx$ will always be future-directed, as will the
final propagation of a particle to one of the $\argn{x'}{N'}$.
Therefore, per the comment at the end of
\sect{sect:parameterized:free}, these propagations will be on shell,
as required for the incoming and outgoing external legs of the Feynman
diagram for an interaction.

Finally, take
\begin{equation*}
    \opG(\lambda) = \me^{-\mi\opV(\lambda)} \,,
\end{equation*}
where $\opV(\lambda)$ is an (appropriately integrated) product of
field operators and their adjoints, representing an individual
interaction vertex.  Expanding $\opG(\lambda)$ in a Taylor series then
gives a sum of Feynman diagrams, with vertices generated by
$\opV(\lambda)$ and Feynman propagators on internal edges.  Indeed, as
shown in detail in \refcite{seidewitz16}, the expression in
\eqn{eqn:B2} can be expanded to exactly duplicate, term for term, the
Dyson series for the scattering operator, as derived using
perturbation theory in traditional QFT.

But now the derivation from parameterized QFT does not suffer from the
mathematical inconsistency resulting from Haag's theorem.

\section{Conclusion} \label{sect:conclusion}

The original argument by Haag for what has become known as Haag's
Theorem was based on the requirement that the vacuum state $\ket{0}$
be the unique state invariant relative to Euclidean transformations
\cite{haag55}.  For the free theory, $\ket{0}$ is a null eigenstate of
the free Hamiltonian $\opH$.  Haag observed that the effective vacuum
state $\ket{0'}$ of the interacting field should also be invariant
under Euclidean transformations, which, under the assumption of the
uniqueness of $\ket{0}$ implies that $\ket{0'}$ equals $\ket{0}$ up to
a phase.  And, since $\ket{0'}$ is a null eigenstate of the
interacting Hamiltonian $\opH'$, it follows that $\ket{0}$ is, too.
However, interaction terms in $\opH'$ generally include the
interaction of the field with itself, such that $\opH'$ does
\emph{not} annihilate $\ket{0}$ (it ``polarizes the vacuum'').  This
is then a contradiction.

A similar argument could be made in the case of a parameterized
theory, if the assumption is made that the vacuum state $\ket{0}$ is
the unique state that is invariant under Poincar\'e transformations.
Then, since $\opG(\lambda)$ must transform as a Lorentz-invariant
scalar not dependent on position, the interacting vacuum $\ketl{0'} =
\opG(\lambda)\ket{0}$ would also be Poincar\'e invariant, and, thus,
equal to $\ket{0}$ up to a phase.  That means that $\ketl{0'}$ would
actually be independent of $\lambda$, which implies that
$\opG(\lambda)$ would commute with the free Hamiltonian $\opH$ and
$\oppsi'_{I}$ would be a free field.

In the parameterized formalism presented here, however, the free
vacuum $\ket{0}$ is \emph{not} required to be the unique
Poincar\'e-invariant state.  It only needs to be the unique
(normalizable) null eigenstate of the free Hamiltonian.  Unlike
$\ket{0}$, the parameterized interacting vacuum $\ketl{0'}$ depends
non-trivially on $\lambda$.  It is the null eigenstate of the
effective Hamiltonian $\opH'(\lambda)$ (which is itself dependent on
$\lambda$), not $\opH$, and can exist without compromising the
uniqueness of $\ket{0}$.

Put another way, the vacuum $\ket{0}$ is symmetric relative to
translations in $\lambda$.  This symmetry is preserved in the
interacting theory in the sense that the physics is not effected by a
translation in $\lambda$, and, in this sense, all $\ketl{0'}$ are
equivalent.  However, fixing on a specific $\lambda = \lambdaz$ breaks
the underlying symmetry for the interacting theory, choosing a
specific $\ketlz{0'}$, distinct from $\ket{0}$, as the effective
vacuum state for constructing interacting particle states.  In this
sense, the additional degree of freedom in the parameterized formalism
provides the possibility for the interacting vacuum to be different
from the non-interacting vacuum.

Thus, in parameterized QFT, it is possible for an interacting field to
be unitarily related to the corresponding free field.  And, as shown
in \refcite{seidewitz16}, it is possible to choose this transformation
so that the traditional Dyson perturbation expansion for scattering
amplitudes can be reproduced term by term in the new formalism.  This
explains why such an expansion works, despite Haag's Theorem---the
result in traditional QFT was essentially correct, only the derivation
was lacking.

Further, as argued in \cite{fraser06}, different formulations of QFT
may lead to different interpretations, even while being empirically
equivalent.  Clearly, one would like to base any interpretation on a
formulation that is rigorously defined mathematically.  But this is
problematic for traditional canonical quantum field theory, since
models of realistic interactions using the canonical formulation run
afoul of Haag's Theorem.

The parameterized formulation presented here resolves this problem.
Further, by allowing the Fock representation of a free field to be
extended to the corresponding interacting field, this approach allows
the intuitive particle interpretation of the free theory to be carried
over to the interacting theory.  Indeed, it can also provide for a
fuller interpretation in terms of spacetime paths, decoherence and
consistent histories over spacetime \cite{seidewitz06b,seidewitz11}.

Of course, this does not resolve all the mathematical issues with
traditional QFT, such as those involved in renormalization.  And the
approach still needs to be extended to cover gauge field theories and
non-Abelian interactions.  It is also worth noting that the
parameterized interacting vacuum states have superficial similarity to
the theta vacua that result from instantons in Yang-Mills theory (see,
for example, \refcites{ticciati99,bailin93}), which also have a $U(1)$
group symmetry.  Whether there is a deeper connection is a subject for
further investigation.

But, in any case, addressing the problem of Haag's Theorem is a step
toward building a firmer foundation, both mathematically and
interpretationally, for QFT in general.

\appendix
\section{Formalization}

This appendix presents the formal statements of Haag's Theorem and
related theorems, as considered in \sect{sect:axiomatic:haag} and
reconsidered in \sect{sect:parameterized:haag}.

\subsection{Haag's Theorem} \label{app:haag1}

As noted in \sect{sect:axiomatic:haag}, Haag's Theorem follows from two other
theorems, which I present here using notation consistent with that of
this paper, but without proof.  For details on the proofs, see
\cite{streater64}.

\begin{theorem} \label{thm:1}
    Let $\oppsi_{1}(x)$ and $\oppsi_{2}(x)$ be two field operators
    defined in respective Hilbert spaces $\HilbH_{1}$ and
    $\HilbH_{2}$.  Suppose there are continuous, unitary
    representations $\opU_{i}(\Delta\threex, R)$ of the inhomogeneous
    Euclidean group of translations $\Delta\threex$ and
    three-dimensional rotations $R$, defined on each $\HilbH_{i}$, for
    $i = 1,2$, such that, for a specific time $t$,
    \begin{equation*}
	\opU_{i}(\Delta\threex, R)\oppsi_{i}(t, \threex)
	    \opU_{i}^{-1}(\Delta\threex, R) =
	    \oppsi_{i}(t, R\threex + \Delta\threex) \,.
    \end{equation*}
    Suppose the representations possess unique invariant states 
    $\ket{0}_{i}$ such that
    \begin{equation*}
	\opU_{i}(\Delta\threex, R)\ket{0}_{i} = \ket{0}_{i} \,.
    \end{equation*}
    Suppose, finally, that there exists a unitary operator $\opG$ such 
    that, at time $t$,
    \begin{equation*}
	\oppsi_{2}(t, \threex) = 
	    \opG\oppsi_{1}(t, \threex)\opG^{-1} \,.
    \end{equation*}
    Then
    \begin{equation*}
	\opU_{2}(\Delta\threex, R) =
	    \opG\opU_{1}(\Delta\threex, R)\opG^{-1}
    \end{equation*}
    and
    \begin{equation*}
	c\ket{0}_{2} = \opG\ket{0}_{1} \,,
    \end{equation*}
    where $c$ is a complex number of modulus one.
\end{theorem}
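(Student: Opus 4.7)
The plan is to transfer the representation $\opU_1$ from its Hilbert space $\HilbH_1$ over to $\HilbH_2$ by conjugation with $\opG$, to show that the transferred representation necessarily coincides with $\opU_2$, and then to read off the vacuum relation from uniqueness of the invariant state.

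Concretely, I would first define $\tilde{\opU}(\Delta\threex, R) := \opG\, \opU_1(\Delta\threex, R)\, \opG^{-1}$ on $\HilbH_2$; being the conjugate of a continuous unitary representation by a unitary, this is itself a continuous unitary representation of the inhomogeneous Euclidean group. Using the intertwining identity $\oppsi_2(t, \threex) = \opG\, \oppsi_1(t, \threex)\, \opG^{-1}$ together with the assumed transformation law for $\oppsi_1$, a short calculation gives
\begin{equation*}
    \tilde{\opU}(\Delta\threex, R)\, \oppsi_2(t, \threex)\, \tilde{\opU}^{-1}(\Delta\threex, R) = \oppsi_2(t, R\threex + \Delta\threex),
\end{equation*}
so $\tilde{\opU}$ acts on the equal-time fields $\oppsi_2$ exactly the way $\opU_2$ does by hypothesis. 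Hence $\hat V(\Delta\threex, R) := \opU_2(\Delta\threex, R)\, \tilde{\opU}(\Delta\threex, R)^{-1}$ commutes with every $\oppsi_2(t, \threex)$ and, taking adjoints, with every $\oppsit_2(t, \threex)$ as well.

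The hard part will be concluding $\hat V \equiv 1$ from this equal-time commutation alone. I would invoke \emph{equal-time irreducibility}: the algebra generated by $\oppsi_2(t, \threex)$ and $\oppsit_2(t, \threex)$, as $\threex$ ranges over space, acts irreducibly on $\HilbH_2$. This is the fixed-time counterpart of the Wightman cyclicity axiom combined with uniqueness of the vacuum, and it carries essentially all of the nontrivial functional-analytic content of the theorem. Given irreducibility, Schur's lemma forces $\hat V(\Delta\threex, R)$ to be a scalar $\lambda(\Delta\threex, R)$ of modulus one, and joint continuity in the group argument makes $\lambda$ a continuous one-dimensional character of the inhomogeneous Euclidean group. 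Because rotations and translations do not commute, that group is perfect and so admits only the trivial such character; therefore $\hat V \equiv 1$ and $\opU_2 = \opG\, \opU_1\, \opG^{-1}$.

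Finally, applying this identity to $\opG \ket{0}_1$ gives
\begin{equation*}
    \opU_2(\Delta\threex, R)\, \opG \ket{0}_1 = \opG\, \opU_1(\Delta\threex, R) \ket{0}_1 = \opG \ket{0}_1,
\end{equation*}
so $\opG \ket{0}_1$ is invariant under the entire representation $\opU_2$. By the assumed uniqueness of the invariant state in $\HilbH_2$, $\opG \ket{0}_1 = c \ket{0}_2$, and unitarity of $\opG$ together with the normalization of the vacua forces $|c| = 1$.
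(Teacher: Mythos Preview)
The paper does not supply its own proof of this theorem; it states the result and refers to Streater and Wightman for the details. Your argument is exactly the standard one found there: conjugate the first representation by the intertwining unitary, compare with the second representation on the equal-time fields, use irreducibility of the time-$t$ field algebra together with Schur's lemma to reduce the discrepancy to a scalar character of the Euclidean group, eliminate the character via the group structure, and then read off the vacuum relation from uniqueness of the invariant state. The one point worth flagging is that equal-time irreducibility---which you correctly identify as carrying the nontrivial content---is an explicit hypothesis in Streater and Wightman's own formulation (their Theorem~4--14) but has been silently omitted from the paper's restatement; your proof sketch is honest about needing it.
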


This theorem immediately implies the following corollary.

\begin{corollary}
    In any two theories satisfying the hypotheses of \thm{thm:1}, the 
    equal-time vacuum expectation values are the same:
    \begin{multline*}
	{}_{1}\bra{0}\oppsi_{1}(t,\threex'_{1})\cdots
		\oppsi_{1}(t,\threex'_{N})
	    \oppsit_{1}(t,\threex_{1})\cdots
		\oppsit_{1}(t,\threex_{N})\ket{0}_{1} = \\
	{}_{2}\bra{0}\oppsi_{2}(t,\threex'_{1})\cdots
		\oppsi_{2}(t,\threex'_{N})
	    \oppsit_{2}(t,\threex_{1})\cdots
		\oppsit_{2}(t,\threex_{N})\ket{0}_{2} \,.
    \end{multline*}
\end{corollary}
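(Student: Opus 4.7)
The plan is simply to substitute the three conclusions of \thm{thm:1} into the right-hand side of the asserted equality and watch everything collapse. The only conclusion needed about $\opG$ beyond its being the intertwiner for the fields is the pair of vacuum relations, which give a unimodular phase $c$ that will cancel out by itself.

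First I would note that, because $\opG$ is unitary, taking the adjoint of the given intertwining relation $\oppsi_{2}(t,\threex) = \opG\oppsi_{1}(t,\threex)\opG^{-1}$ yields $\oppsit_{2}(t,\threex) = \opG\oppsit_{1}(t,\threex)\opG^{-1}$. Therefore every one of the $2N$ factors on the right-hand side of the corollary can be rewritten as a $\opG$-conjugate of the corresponding theory-1 operator.

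Next I would substitute all of these into the right-hand vacuum expectation value. Between any two adjacent factors the product $\opG^{-1}\opG$ appears and collapses to the identity; the only surviving factors of $\opG$ are a single $\opG$ on the far left and a single $\opG^{-1}$ on the far right of the string of fields. Using the relation $c\ket{0}_{2} = \opG\ket{0}_{1}$ and its adjoint (with $|c|=1$, so that $c^{*} = c^{-1}$), the outermost operators act on the vacua to produce ${}_{2}\bra{0}\opG = c\,{}_{1}\bra{0}$ on the left and $\opG^{-1}\ket{0}_{2} = c^{-1}\ket{0}_{1}$ on the right. The two phase factors combine to $c\cdot c^{-1} = 1$, and the right-hand side becomes literally the left-hand side.

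There is no real obstacle here; this is pure bookkeeping once \thm{thm:1} is in hand. The only points that require a moment of care are (i) invoking unitarity of $\opG$ so that the dagger fields transform by the \emph{same} conjugation as the undaggered fields, and (ii) checking that the two vacuum phases are complex conjugates of each other so they cancel rather than combine into a residual $c^{2}$. Both are immediate from $\opG^{\dagger} = \opG^{-1}$ and $|c|=1$. Note that the argument uses nothing about the Euclidean group action beyond the existence of the intertwiner $\opG$ given to us by the hypothesis, so the conclusion is really just an algebraic corollary of the intertwining/vacuum-identification statements.
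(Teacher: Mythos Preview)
Your argument is correct and is exactly the ``immediate'' implication the paper has in mind: the paper gives no separate proof of the corollary, simply noting that it follows directly from \thm{thm:1}, and your telescoping-conjugation computation together with the cancellation of the unimodular phase is precisely how that implication is realized.
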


The second theorem is a general result originally from \cite{jost61}.

\begin{theorem} \label{thm:2}
    If $\oppsi(x)$ is a field for which the vacuum is cyclic, and if
    \begin{equation*}
	\bra{0}\oppsi(x)\oppsit(\xz)\ket{0} = \propa \,,
    \end{equation*}
    then $\oppsi(x)$ is a free field.
\end{theorem}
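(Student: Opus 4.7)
The plan is to show that the free-field two-point function forces $\oppsi(x)$ to satisfy the Klein-Gordon equation, after which standard arguments expose the full free-field structure and reconstruct the unitary equivalence to Fock space.

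First, I would insert a complete set of energy-momentum eigenstates between $\oppsi(x)$ and $\oppsit(\xz)$ and use translation invariance to rewrite the two-point function as the Fourier transform of a positive spectral measure $d\mu(p)$ supported in the forward light cone (by the spectral condition of Axiom \ref{axm:0}). Matching this representation against the given $\propa$ forces $d\mu(p)$ to be concentrated on the single mass shell $p^{2}=-m^{2}$, $p^{0}>0$, with the free-field weight $[(2\pi)^{3}2\Ep]^{-1}d^{3}p$. Hence $\oppsit(\xz)\ket{0}$ is a pure one-particle state of mass $m$, and the same holds for $\oppsi(x)\ket{0}$ by taking adjoints; in particular
\begin{equation*}
    \left(-\frac{\partial^{2}}{\partial x^{2}}+m^{2}\right)\oppsi(x)\ket{0} = 0 \,.
\end{equation*}

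The next step is to promote this vector identity to the operator identity $(-\partial^{2}+m^{2})\oppsi(x)=0$ on a dense domain. Setting $j(x):=(-\partial^{2}+m^{2})\oppsi(x)$, differentiation of the local commutativity axiom shows that $j(x)$ still commutes with $\oppsi(y)$ and $\oppsit(y)$ whenever $x$ and $y$ are spacelike separated. For any polynomial $\Pi$ in the fields whose arguments are all spacelike to $x$ we therefore have $j(x)\Pi\ket{0}=\Pi\, j(x)\ket{0}=0$. The cyclicity axiom combined with the analyticity of Wightman functions in the forward tube (once again a consequence of the spectral condition) then lets this vanishing be extended from the spacelike complement to all of Minkowski space, giving $j(x)=0$ on $D_{0}$, so that $\oppsi(x)$ obeys the Klein-Gordon equation.

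Once $\oppsi(x)$ is a Klein-Gordon field it splits uniquely into positive- and negative-frequency parts whose Fourier coefficients play the role of annihilation and creation operators; their commutation relations and the annihilation property on the vacuum can be read off directly from the already-known two-point function. Iterating the analytic-continuation argument of the previous paragraph forces every higher $n$-point Wightman function to agree with its free-field counterpart through the usual Wick expansion, so that the Wightman reconstruction theorem produces the desired unitary equivalence between $\oppsi(x)$ and a free field of mass $m$ on Fock space.

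The principal obstacle I expect is the analytic-continuation step that lifts $j(x)\Pi\ket{0}=0$ from the spacelike region to all of spacetime; this is where local commutativity and the spectral condition must act in concert, and it is precisely this combination that the parameterized formalism of the following sections will relax, ultimately allowing Haag's theorem to be evaded.
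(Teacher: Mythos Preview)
The paper does not prove this theorem: it is stated as a known result (originally from \cite{jost61}) and the reader is referred to \cite{streater64} for the proof. Your outline is essentially the standard Jost--Schroer argument as it appears there: derive $j(x)\ket{0}=0$ from the spectral representation of the two-point function, use locality together with the Reeh--Schlieder density of spacelike-localized polynomials acting on the vacuum (which is precisely what your analyticity-in-the-forward-tube step amounts to) to promote this to the operator identity $j(x)=0$, and then rebuild the free-field structure. Your closing remark is also on target: the step that fails in the parameterized setting of \sect{sect:haag2} is exactly the one that here relies on Lorentz covariance and the spectral condition acting together.

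One point worth tightening: the two-point function alone gives only the vacuum expectation value of the commutator $[\oppsi(x),\oppsit(y)]$, not the commutator as an operator, so the creation/annihilation commutation relations cannot quite be ``read off directly'' at that stage. The usual route is to note that once the Klein--Gordon equation holds, the operator $[\oppsi(x),\oppsit(y)]-\mi\Delta(x-y)$ is itself local and annihilates the vacuum, so the same Reeh--Schlieder argument you already invoked forces it to vanish; the commutator is then a c-number, the truncated $n$-point functions vanish for $n>2$, and the reconstruction theorem yields the unitary equivalence to the free field on Fock space, as you indicate.
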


Using these theorems, we can prove Haag's Theorem.

\begin{theorem}[Haag's Theorem] \label{thm:3}
    Suppose that $\oppsi_{1}(x)$ is a free field and $\oppsi_{2}(x)$
    is a local, Lorentz-covariant field.  Suppose further that the
    fields $\oppsi_{1}(x)$, $\dot{\oppsi}_{1}(x)$, $\oppsi_{2}(x)$ and
    $\dot{\oppsi}_{1}(x)$ satisfy the hypotheses of \thm{thm:1}.  Then
    $\oppsi_{2}(x)$ is also a free field.
\end{theorem}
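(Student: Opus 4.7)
The plan is to chain Theorem~\ref{thm:1}, its Corollary, and Theorem~\ref{thm:2} together, with Lorentz covariance and the standard Wightman analyticity argument bridging the equal-time information from the Corollary to the full two-point function needed to invoke Jost-Schroer.

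First, I would apply Theorem~\ref{thm:1} to the pairs $(\oppsi_1(t,\threex),\dot\oppsi_1(t,\threex))$ and $(\oppsi_2(t,\threex),\dot\oppsi_2(t,\threex))$ at a fixed time $t$. The unitary $\opG$ of the hypothesis intertwines the two field systems and, by the theorem, also intertwines the Euclidean-group representations and carries $\ket{0}_1$ to $\ket{0}_2$ up to a phase. The Corollary then gives that \emph{all} equal-time vacuum expectation values of $\oppsi_2$ coincide with those of the free field $\oppsi_1$. In particular, the two-point function satisfies
\begin{equation*}
    {}_2\bra{0}\oppsi_2(t,\threex)\oppsit_2(t,\threex')\ket{0}_2
      = \propaequaltime,
\end{equation*}
where the right-hand side is the free-field kernel evaluated at equal times (with the obvious meaning of the notation above).

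Second, I would promote this equal-time identity to an identity at arbitrary spacelike separations. Both $\oppsi_2$ and the free field are Lorentz-covariant (Axiom~\ref{axm:2}), so the vacuum two-point function $W_2(x-x')$ depends only on the Lorentz-invariant structure of $x-x'$. Any spacelike $\xi = x - x'$ can be mapped by a Lorentz transformation to a purely spatial vector $(0,\threexi)$, so the equal-time equality forces $W_2(\xi) = \propaxi$ for every spacelike $\xi$.

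Third, I would extend this agreement to \emph{all} $\xi$ using the spectral condition of Axiom~\ref{axm:0}. By the usual Wightman argument, $W_2$ and the free two-point function are boundary values of functions analytic in the forward tube, and they agree on the real, open Jost region of spacelike separations; the edge-of-the-wedge theorem then forces $W_2(\xi) = \propaxi$ identically. With the two-point function of $\oppsi_2$ now matching that of a free field of mass $m$, and with the vacuum cyclic for $\oppsi_2$ by Axiom~\ref{axm:3}, Theorem~\ref{thm:2} applies and delivers the conclusion that $\oppsi_2$ is free.

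The main obstacle is the third step: the passage from equal-time (or spacelike) coincidence of the two-point function to coincidence everywhere. Steps one and two are essentially bookkeeping on top of Theorem~\ref{thm:1} and its Corollary, but the analytic continuation argument is where the full strength of the Wightman framework --- spectral condition, Lorentz covariance, and temperedness of the distributions --- is genuinely used. It is precisely this step that one would scrutinize when looking for loopholes in a modified (e.g.\ parameterized) formalism.
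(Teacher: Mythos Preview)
Your proposal is correct and follows essentially the same route as the paper: apply the Corollary to Theorem~\ref{thm:1} to match the equal-time two-point functions, use Lorentz covariance to extend to all spacelike separations, analytically continue to all separations, and then invoke Theorem~\ref{thm:2}. The only difference is that you spell out the analytic-continuation step (spectral condition, forward tube, edge-of-the-wedge) more explicitly than the paper, which simply says ``by analytic continuation''; your closing remark that this is precisely the step to scrutinize in a modified framework is also exactly the observation the paper goes on to exploit.
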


\begin{proof}
    Since $\oppsi_{1}(x)$ is a free field, its vacuum expectation
    value is given by \eqn{eqn:A0b}.  The corollary to \thm{thm:1}
    then implies that, at a specific time $\tz$,
    \begin{equation} \label{eqn:A8}
	{}_{2}\bra{0}\oppsi_{2}(\tz,\threex)\oppsit_{2}
	    (\tz, \threexz)\ket{0}_{2} = \propasym(0, \threex - \threexz) \,.
    \end{equation}
    Any two position vectors $(t,x)$ and $(\tz,\xz)$ can be brought
    into the equal time plane $t = \tz$ by a Lorentz transformation,
    if their separation is spacelike.  Along with the given covariance
    of $\oppsi_{2}(x)$, this means that \eqn{eqn:A8} can be extended
    to any two spacelike positions and then, by analytic continuation,
    to any two positions:
    \begin{equation*}
	{}_{2}\bra{0}\oppsi_{2}(x)\oppsit_{2}(\xz)\ket{0}_{2} 
	    = \propa \,.
    \end{equation*}
    Haag's Theorem is then an immediate consequence of this and
    \thm{thm:2}.
\end{proof}

\subsection{Haag's Theorem Reconsidered} \label{app:haag2}

First, note that \thms{thm:1} and \ref{thm:2} may be directly adapted
for parameterized fields.  The axioms of the parameterized theory,
though, do not require a unique vacuum state, so we must simply assume
the relationship between vacuum states of the two fields in
\thm{thm:1p} below.  This simplifies the derivation of the relevant
conclusion of equality of expectation values.

\begin{theoremp} \label{thm:1p}
    Let $\oppsi_{1}(x)$ and $\oppsi_{2}(x)$ be two (off-shell,
    Schr\"odinger-picture) field operators, defined in respective
    Hilbert spaces, and $\opH_{1}$ and $\opH_{2}$ be Hamiltonian
    operators defined on those Hilbert spaces, with vacuum states
    $\ket{0}_{1}$ and $\ket{0}_{2}$.  Suppose, that there exists a
    unitary operator $\opG$, such that
    \begin{equation*}
	\oppsi_{2}(x) = \opG\oppsi_{1}(x)\opG^{-1}
    \end{equation*}
    and
    \begin{equation*}
	\ket{0}_{2} = \opG\ket{0}_{1} \,.
    \end{equation*}
    Then the equal-$\lambda$ vacuum expectation values of the fields
    are the same:
    \begin{equation*}
	\begin{split}
	    &{}_{1}\bra{0}\paraml{\oppsi_{1}}{x'_{1}}\cdots
	             \paraml{\oppsi_{1}}{x'_{N}}
	             \paraml{\oppsi_{1}\dadj}{x_{1}}\cdots
		     \paraml{\oppsi_{1}\dadj}{x_{N}}\ket{0}_{1} \\
	    &= {}_{2}\bra{0}\paraml{\oppsi_{2}}{x'_{1}}\cdots
	             \paraml{\oppsi_{2}}{x'_{N}}
	             \paraml{\oppsi_{2}\dadj}{x_{1}}\cdots
		     \paraml{\oppsi_{2}\dadj}{x_{N}}\ket{0}_{2} \,.
	\end{split}
    \end{equation*}
\end{theoremp}
\begin{proof}
    With the given assumptions, the expectation values are clearly
    equal for the Schr\"odinger-picture fields.  The parameterized
    Heisenberg-picture fields are related to the Schr\"odinger picture
    fields by the unitary transformations $\exp(-\mi\opH_{i}\lambda)$,
    and the vacuum states $\ket{0}_{i}$ are invariant under the
    corresponding such transformations.  Therefore, the vacuum
    expectation values of the parameterized fields are equal to the
    corresponding vacuum expectation values for the respective
    unparameterized fields and, hence, to each other, for all
    $\lambda$.
\end{proof}

\begin{theoremp} \label{thm:2p}
    If $\oppsixl$ is a field with a Hamiltonian $\opH$ having vacuum
    state $\ket{0}$, and if
    \begin{equation*}
	\bra{0}\oppsixl\oppsitlz{\xz}\ket{0} = \kerneld \,,
    \end{equation*}
    where $\kerneld$ is given by \eqn{eqn:A6}, then $\oppsixl$ is a
    free field.
\end{theoremp}
\begin{proof}
    Let $\braxl =\bra{0}\oppsil{x}$. Then,
    \begin{equation*}
	\begin{split}
	    \braxl &= \intfour \xz\, \innerxlxlz \braxlz \\
		   &= \intfour \xz\, \kerneld \braxlz \,.
        \end{split}
    \end{equation*}
    Therefore,
    \begin{equation*}
	\mi\pderiv{}{\lambda}\braxl
	    = \left( -\frac{\partial^{2}}{\partial x^{2}}
	             + m^{2} \right) \braxl \,.
     \end{equation*}
     However,
     \begin{equation*}
	 \begin{split}
	     \mi\pderiv{}{\lambda}\braxl
		 &= \mi\pderiv{}{\lambda}\bra{0}\oppsil{x}
		  = \bra{0} [\oppsil{x}, \opH] \\
		 &= \bra{0} \oppsil{x} \opH
		  = \braxl \opH \,,
	 \end{split}
     \end{equation*}
     since $\bra{0}\opH = 0$.  Thus, the Hamiltonian $\opH$ acts as
     \begin{equation*}
	 \opH =  -\frac{\partial^{2}}{\partial x^{2}} + m^{2} \,,
     \end{equation*}
     which is the position representation of the required free-field 
     Hamiltonian (see \sect{sect:parameterized:free}).
     
     Further, the equal-$\lambda$ expectation value for $\oppsi$ is
     \begin{equation*}
	 \bra{0}\oppsixl\oppsitl{\xz}\ket{0} 
	     = \kersym(x - \xz; 0)
	     = \delta^{4}(x - \xz) \,.
     \end{equation*}
     This, together with \axm{axm:3p}, implies that 
     \begin{equation*}
	 \bra{0}\oppsitl{\xz} \oppsixl \ket{0} = 0 \,, 
     \end{equation*}
     for all $x$ and $\xz$. In 
     particular, in the limit of $\xz \rightarrow x$, 
     \begin{equation*}
	 \bra{0} \oppsitl{x} \oppsixl \ket{0} 
	     = |\oppsixl \ket{0}|^{2} 
	     = 0 \,,
     \end{equation*}
     so $\oppsixl\ket{0} = 0$, for all $x$ and $\lambda$. This and the 
     commutation relations of \axm{axm:3p} indicate that $\oppsi(x)$ 
     and its adjoint can be used to build the states of a 
     free-particle Fock space.
\end{proof}

Now, even given \thms{thm:1p} and \ref{thm:2p}, the following
proposition corresponding to \thm{thm:3} turns out \emph{not} to be
true.

\setcounter{propositionp}{\value{theoremp}}
\begin{propositionp}[Haag's Theorem?]
    Suppose that $\paraml{\oppsi_{1}}{x}$ is a free field and
    $\paraml{\oppsi_{2}}{x}$ is a local, Lorentz-covariant field.
    Suppose further that the fields $\paraml{\oppsi_{1}}{x}$ and
    $\paraml{\oppsi_{2}}{x}$ satisfy the hypotheses of \thm{thm:1p}.
    Then $\paraml{\oppsi_{2}}{x}$ is also a free field.
\end{propositionp}

Given the assumptions of this proposition and \thm{thm:1p}, we can
easily deduce the equivalent of \eqn{eqn:A8}:
\begin{equation*}
    {}_{2}\bra{0}\paramlz{\oppsi_{2}}{x}
	\paramlz{\oppsi_{2}\adj}{\xz}\ket{0}_{2} 
	    = \kersym(x - \xz; 0) = \delta^{4}(x - \xz) \,.
\end{equation*}
However, just as \eqn{eqn:A8} was at the single time $\tz$, the
equivalent equation for the parameterized theory is at the single
parameter value $\lambdaz$.  But now, if we try to generalize this to
unequal parameter values $\lambda$ and $\lambdaz$, there is no
equivalent of the Lorentz transformation to use in order to bring the
parameter values back to equality.  All that is available is parameter
translation, which would maintain the difference $\lambda - \lambdaz$.

Therefore, it is possible for $\paraml{\oppsi_{1}}{x}$ and
$\paraml{\oppsi_{2}}{x}$ to have the same equal-$\lambda$ two-point
vacuum expectation value, but for their unequal-$\lambda$ expectation
values to differ.  Thus, \thm{thm:2p} does not apply and the proof of
Haag's theorem does not go through.

\bibliography{../../RQMbib}

\end{document}